\let\color@begingroup\relax
   \let\color@endgroup\relax}{}%
\def\fix@ieeecolor@hbox#1{%
  \hbox{\color@begingroup#1\color@endgroup}}
\patchcmd\@makecaption{\hbox}{\fix@ieeecolor@hbox}{}{\FAILED}
\patchcmd\@makecaption{\hbox}{\fix@ieeecolor@hbox}{}{\FAILED}
\definecolor{amethyst}{rgb}{0.6, 0.4, 0.8}
\pgfplotsset{compat=newest}
\tikzset{
	block/.style = {draw, rectangle,
		minimum height=1cm,
		minimum width=2cm},
	input/.style = {coordinate,node distance=1cm},
	output/.style = {coordinate,node distance=6cm},
	arrow/.style={draw, -latex,node distance=2cm},
	pinstyle/.style = {pin edge={latex-, black,node distance=2cm}},
	sum/.style = {draw, circle, node distance=1cm},
}
\let\NAT@parse\undefined
\def\BibTeX{{\rm B\kern-.05em{\sc i\kern-.025em b}\kern-.08em
    T\kern-.1667em\lower.7ex\hbox{E}\kern-.125emX}}
\newtheorem{definition}{Definition}
\newtheorem{assumption}{Assumption}
\newtheorem{theorem}{Theorem}
\newtheorem{lemma}{Lemma}
\newtheorem{proposition}{Proposition}
\newcommand{\real}{\mathbb{R}}
\newcommand{\revision}[1]{\textcolor{black}{#1}}
\begin{document}
\title{\revision{Safe Event-triggered Gaussian Process Learning for  Barrier-Constrained Control}}
\author{Armin Lederer$^{*}$,~\IEEEmembership{Member,~IEEE}, Azra Begzadi\'c$^{*}$,~\IEEEmembership{Graduate Student Member,~IEEE,} Sandra Hirche,~\IEEEmembership{Fellow,~IEEE,} Jorge Cort{\'e}s,~\IEEEmembership{Fellow,~IEEE,} Sylvia Herbert,~\IEEEmembership{Member,~IEEE}
\thanks{\textit{${}^*$ Armin Lederer and Azra Begzadi\'c contributed equally to this work.}}
\thanks{A. Lederer is with the Department of Electrical and Computer Engineering, College of Design and Engineering, National University of Singapore (e-mail: armin.lederer@nus.edu.sg).}
\thanks{A. Begzadi\'c, J. Cort\'es and S. Herbert are with the Department of Mechanical and Aerospace Engineering, University of California, San Diego (e-mail: abegzadic@ucsd.edu, cortes@ucsd.edu, sherbert@ucsd.edu).}
\thanks{S. Hirche is with the Chair of Information-oriented Control, TUM School of Computation, Information and Technology, Technical University of Munich, Germany (e-mail: hirche@tum.de).}
}

\maketitle

\begin{abstract} 
While control barrier functions (CBFs) are employed
in addressing safety, control synthesis methods based on them generally rely on accurate system dynamics. This is a critical limitation, since the dynamics of complex systems are often not fully known. Supervised machine learning techniques hold great promise for alleviating this weakness by inferring models from data. We propose a novel 
\revision{approach for safe event-triggered learning of Gaussian process models in CBF-based continuous-time control for unknown control-affine systems. By applying a finite excitation at triggering times, our approach ensures a sufficient information gain to maintain the feasibility of the CBF-based safety condition with high probability. Our approach probabilistically guarantees safety based on a suitable GP prior and rules out} Zeno behavior in the triggering scheme. 
The effectiveness of the proposed approach and theory is demonstrated in simulations.
\end{abstract}
\begin{IEEEkeywords}
Event-triggered learning, safety-critical control, Gaussian processes, control barrier functions.

\end{IEEEkeywords}
\definecolor{azrablue}{RGB}{0, 103, 180}
\definecolor{gr}{RGB}{85, 139, 47}
\definecolor{ared}{RGB}{184, 15, 10}
\definecolor{asiva}{RGB}{94, 94, 94}
\section{Introduction}

\label{sec:introduction}
\IEEEPARstart{C}{ontrol} barrier functions (CBFs) are commonly used to guarantee the safety of nonlinear systems \cite{Ames2017}. Given a valid CBF for a control-affine system, safe control inputs can be efficiently synthesized online using a quadratic program (QP). This QP formulation generally assumes perfect knowledge of system
dynamics, which can be challenging to derive analytically for many applications. 
For example, real-world systems such as autonomous vehicles, industrial machinery, and medical robots have inherent complexity, which often prevents accurate system identification using classical techniques.\looseness=-1

Recently, Gaussian processes (GPs) have gained attention for \revision{learning models} in control due to their strong theoretical foundations \cite{Rasmussen2006}. \revision{In particular, the measure of model uncertainty that GPs provide along with predictions is beneficial for safety-critical problems, as it allows leveraging the robustness of control algorithms. This idea is commonly exploited by combining GPs with CBFs~\cite{wang_safe_2018,Jagtap2020ControlBF},} which results in a safety filter that can be formulated as second-order cone program (SOCP). 
\revision{Note that we focus on continuous-time systems in this article as CBF methods for the discrete-time setting exhibit a fundamentally different structure \cite{cheng2019end}.
}\looseness=-1

\subsubsection*{Related Work}
\revision{Early works on CBF-based continuous-time control with GP models have considered uncertain models, in which} the effect of the control input on the dynamics is known, \revision{i.e., there is only uncertainty about the drift term. Hence,} GP accuracy guarantees can be directly employed for adapting the robustness of CBF conditions to the model error \cite{wang_safe_2018, Jagtap2020ControlBF}. \revision{The resulting optimization problems defining these control laws remain QPs, such that feasibility is guaranteed despite model inaccuracies.
Since inaccurate GP models negatively affect their conservatism, online learning can be used to improve their performance.%
} 
In particular, event-triggered learning~\cite{Solowjow2018} is a promising paradigm wherein data is collected and the model is updated only when certain conditions are satisfied, e.g., model uncertainty exceeds a prescribed threshold.
Event-triggered learning applied to GP model inference is exploited for Lyapunov-based stabilization \cite{Umlauft2019FeedbackLB, zhang_event-triggered_2024}, for which exceptional data efficiency is demonstrated. 
Due to the close relationship between Lyapunov functions and barrier functions~\cite{Ames2017}, it is possible to extend these concepts to CBFs~\revision{\cite{wu_safe_2022, zhang_gaussian_2025}.}

By \revision{encoding} control-affine model structures in the kernel used in GPs, the \revision{adaptation of robustness in CBFs} can be straightforwardly extended to obtain safe control inputs despite \revision{fully unknown control-affine systems \cite{Long2022SafeCS, zhang_real-time_2024}.} Since the resulting optimization problems for determining control inputs become SOCPs rather than QPs, safety can still be efficiently ensured in principle. However, this optimization problem is known to lack feasibility guarantees in general, see e.g.~\cite{Mestres2023FeasibilityAR}, \revision{such that safety guarantees rely on the accuracy of the GP model \cite{castaneda2023}.} \revision{Worst-case back-up strategies relying on prior knowledge can partially remedy this issue \cite{Lederer2023CBF}, but they generally deteriorate performance. While online learning strategies seem appealing to mitigate this issue, their design is not trivial as the chosen control input has a crucial impact on learning \cite{Lederer2021HowTD} in contrast to the situation where only the drift term is unknown. Based on a feasibility analysis of the CBF-based safety conditions for GP models, safe directions for event-triggered learning are derived in \cite{castaneda2023}. However, the necessary magnitude of control inputs in this direction is potentially unbounded.
Moreover, the actual benefit of added data points can be arbitrarily small, such that the occurrence of multiple update events at the same time cannot be excluded. Hence,} Zeno behavior, i.e., an infinite number of trigger events in a finite time interval \cite{Goebel2012HybridDS}, is possible. \revision{Overall, the analysis in \cite{castaneda2023} examines the effect of uncertainty on the CBF 
safety guarantees and, to maintain recursive feasibility, imposes extra assumptions on the GP model at every update.
%
Thus, crucial problems remain open in the design of a practically implementable event-triggered learning approach.}\looseness=-1
%
\subsubsection*{Contribution}
\revision{We propose a novel approach for event-triggered learning of GP models that enables the application of CBF-based continuous-time controllers to fully unknown control-affine dynamics. The key contributions for the development of our implementable approach are the following:
}
\begin{itemize}
    \item We derive a closed-form expression for the required control inputs at GP model updates. \revision{Thereby, we ensure that finite controls are sufficient to maintain the feasibility of CBF conditions with event-triggered learning;}
    \item \revision{We prove the recursive feasibility of our safe event-triggered learning for fully unknown control-affine dynamics. By analyzing the complete model update steps, no assumptions on the GP beyond the prior are required;}
    \item \revision{
    We show that our approach allows us to tune the minimal inter-event times based on the magnitude of the control inputs applied as excitation signal. Since this property excludes Zeno behavior, we ensure the well-posedness of event-triggered learning -- a crucial property in practice.
    }
\end{itemize}
\subsubsection*{Outline} The remainder of this article is structured as follows: Section \ref{sec:problemstatement} defines the problem formulation. In Section \ref{sec:con-time}, we present a novel event-triggered learning approach with a switching control law that guarantees safety and feasibility. 
Finally, in Section \ref{sec:numerical_eval}, we illustrate the proposed method in a numerical simulation of an adaptive cruise control example, before we conclude the paper in Section \ref{sec:conclusion}.

\section{Problem Formulation and Preliminaries}
\label{sec:problemstatement}
\subsection{Notation} 
Vectors and matrices are denoted by bold lower and upper-case symbols, respectively.
We denote by $\real_{>0}$ and $\real_{\ge 0}$
the sets of real positive numbers without and with zero, respectively, and by $\mathbb{N}$ the set of natural numbers. The Euclidean norm is denoted by $\left \| \cdot \right \|$. A function $f: \mathbb{R}^n \rightarrow \mathbb{R}$ is locally Lipschitz if, for every compact set $\mathcal{S} \subset \mathbb{R}^n$, there exists $L>0$ such that $\|f(x)-f(y)\| \leq L\|x-y\|$, for all $x, y \in \mathcal{S}$. $\nabla_{\bm{x}} f$ denotes the gradient of a function $f$ with respect to $\bm{x}$. A continuous function $\alpha: \real_{>0} \rightarrow \real_{>0}$ is of extended class $\mathcal{K}$ function if it is strictly increasing, $\alpha(0)=0$, $\lim _{r \rightarrow \infty} \alpha(r)=\infty$ and $\lim _{r \rightarrow-\infty} \alpha(r)=-\infty$. The Gaussian distribution with mean $\mu \in \mathbb{R}$ and variance $\sigma^2 \in \real_{>0}$ is denoted by $\mathcal{N}\left(\mu, \sigma^2\right)$. The function $\operatorname{diag}([x_1, x_2, \ldots x_n])$ constructs a diagonal matrix where the elements $(x_1, x_2, \ldots x_n)$ are scalar values representing the diagonal entries.
$\bm{I}_{N}$ denotes the $N \times N$ identity matrix. 
The absolute value of a scalar value $x$ is denoted by $|x|$. The sign function, denoted as $\mathrm{sgn}(\cdot)$, returns $-1$ for negative values, $0$ for zero, and $1$ for positive values.

\subsection{Problem Setting}

We consider \revision{nonlinear control-affine} dynamics of the form 
\begin{align}\label{eq:si_sys}
    \revision{\dot{\bm{x}}=\bm{f}(\bm{x})+\bm{G}(\bm{x})\bm{u}}
\end{align}
where $\bm{x}\in\mathcal{X}\subseteq\mathbb{R}^{n}$, \revision{$\bm{u}\in\mathcal{U}\subseteq\mathbb{R}^{m}$,} and $\bm{f}:\mathbb{R}^{n}\rightarrow\revision{\mathbb{R}^{n}}$ and $\bm{G}:\mathbb{R}^{n}\rightarrow\revision{\mathbb{R}^{n}\times\mathbb{R}^{m}}$ are unknown, Lipschitz functions. Throughout the paper, the argument
$t$ in $\bm{x}(t)$ and $\bm{u}(t)$ is omitted for brevity whenever possible.

Based on this system description, we consider the problem of designing a control law $\bm{\pi}:\mathcal{X} \rightarrow \mathbb{R}$ which ensures the safety of the system \eqref{eq:si_sys}. The primary goal of safety is to constrain all system trajectories to a predefined safe set $\mathcal{C}$, which we assume to exhibit no isolated points. 
We define this set as the zero-super level set $\mathcal{C} =\left\{\bm{x} \in \mathcal{X}: \psi(\bm{x}) \geq 0\right\}$
of a continuously differentiable function $\psi: \mathcal{X} \rightarrow \mathbb{R}$. Therefore, safety essentially reduces to forward invariance of~$\mathcal{C}$, as formalized in the following definition.
\begin{definition}[Safety \revision{\cite{cohen_safety-critical_2024}}] \label{def:safety}
A system \eqref{eq:si_sys} is safe with respect to the set $\mathcal{C}$ if this set is forward control invariant, i.e., for some $u \in \mathcal{U}$ starting at any initial condition $\bm{x}_0 \in \mathcal{C}$, it holds that $\bm{x}(t) \in \mathcal{C}$  for $\bm{x}(0)=\bm{x}_0 $ and all $t\geq 0$.
\end{definition}
A common method to show this form of safety relies on the concept of barrier function (CBF), a powerful tool to certify the safety of a wide range of control laws.
\begin{lemma}[Control Barrier Functions \revision{\cite{cohen_safety-critical_2024}}]\label{lem:cbf}
    Consider a dynamical system \eqref{eq:si_sys} and a set $\mathcal{C}$ defined by a continuously differentiable function $\psi: \mathcal{X} \rightarrow \mathbb{R}$. If there exists an extended class $\mathcal{K}_{\infty}$ function $\alpha:\mathbb{R}\rightarrow\mathbb{R}$ such that
    \begin{align}\label{eqn:cbf}
        \max\limits_{\bm{u}\in\mathcal{U}} c_0(\bm{x})+\bm{c}^T(\bm{x})\bm{u} \, \revision{>} \, 0,
    \end{align}
    with $c_0(\bm{x})=\nabla_{\bm{x}}^T \psi (\bm{x})\bm{f}(\bm{x})+\alpha(\psi(\bm{x}))$ and $\bm{c}^T(\bm{x})=[c_1(\bm{x}), \ldots, c_m(\bm{x})] = \nabla_{\bm{x}}^T \psi(\bm{x}) \bm{G}(\bm{x})$
    holds for all $\bm{x}\in \mathcal{C}$, $\psi(\cdot)$ is called \emph{control barrier function} (CBF) and 
    every Lipschitz control law $\bm{\pi}(\cdot)$ such that $\bm{\pi}(\bm{x})\in\{\bm{u}\in\mathcal{U}: c_0(\bm{x})+\bm{c}^T(\bm{x})\bm{u}> 0\}$ renders the system \eqref{eq:si_sys} safe with respect to $\mathcal{C}$.\looseness=-1
\end{lemma}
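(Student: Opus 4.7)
The plan is to prove forward invariance of $\mathcal{C}$ directly from a differential inequality for $\psi$ along closed-loop trajectories. First, I would check that the closed-loop system $\bm{\dot x} = \bm{A}\bm{x} + \bm{b}(f(\bm{x}) + g(\bm{x})\pi(\bm{x}))$ admits a unique absolutely continuous solution from any $\bm{x}_0 \in \mathcal{C}$. This follows since $f, g$ are locally Lipschitz by Assumption~\ref{as:Lipschitz} and $\pi$ is Lipschitz by hypothesis, so the right-hand side is locally Lipschitz in $\bm{x}$. This step is routine and sets the stage for differentiating along trajectories.

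Next, I would differentiate $\psi$ along the closed-loop solution. Since $\psi$ is continuously differentiable, the chain rule yields
\begin{equation*}
\dot\psi(\bm{x}) = \nabla_{\bm{x}}^T \psi(\bm{x})\bigl(\bm{A}\bm{x} + \bm{b}f(\bm{x})\bigr) + \nabla_{\bm{x}}^T \psi(\bm{x})\bm{b}g(\bm{x})\pi(\bm{x}).
\end{equation*}
Recognizing the first term as $c(\bm{x}) - \alpha(\psi(\bm{x}))$ and the second as $d(\bm{x})\pi(\bm{x})$, I can rewrite this as $\dot\psi(\bm{x}) = c(\bm{x}) + d(\bm{x})\pi(\bm{x}) - \alpha(\psi(\bm{x}))$. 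By the assumption that $\pi(\bm{x}) \in \{u \in \mathcal{U} : c(\bm{x}) + d(\bm{x})u \ge 0\}$ whenever $\bm{x}\in\mathcal{C}$, the first two terms are nonnegative, giving the key differential inequality
\begin{equation*}
\dot\psi(\bm{x}(t)) \geq -\alpha(\psi(\bm{x}(t)))
\end{equation*}
as long as $\bm{x}(t) \in \mathcal{C}$.

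To conclude, I would invoke a standard comparison argument. Consider the scalar initial value problem $\dot y = -\alpha(y)$, $y(0)=\psi(\bm{x}_0)\ge 0$. Since $\alpha$ is an extended class $\mathcal{K}_\infty$ function vanishing only at zero, the solution $y(t)$ stays nonnegative for all $t\ge 0$. Applying the comparison lemma (see, e.g., \cite{Khalil2008NonlinearST}) to $\psi(\bm{x}(t))$ and $y(t)$ on any interval where the closed-loop solution exists and $\bm{x}(t)\in\mathcal{C}$, I obtain $\psi(\bm{x}(t)) \geq y(t) \geq 0$, so $\bm{x}(t)$ cannot leave $\mathcal{C}$. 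A short contradiction argument, using continuity of $\psi$ and $\bm{x}(\cdot)$, rules out the possibility that the trajectory exits $\mathcal{C}$ at some finite time.

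The main obstacle is the technical subtlety in the last step: the inequality $\dot\psi \geq -\alpha(\psi)$ is only guaranteed while $\bm{x}(t)\in\mathcal{C}$, so one must carefully argue that an exit time cannot occur. The standard trick is to suppose, for contradiction, that $t^* = \inf\{t\ge 0 : \psi(\bm{x}(t))<0\}$ is finite; continuity then forces $\psi(\bm{x}(t^*))=0$, while the comparison lemma applied on $[0,t^*]$ yields $\psi(\bm{x}(t^*))\ge y(t^*)\ge 0$, and one uses the strict structure of $\alpha$ near zero together with the inequality just past $t^*$ to reach a contradiction. Everything else—existence/uniqueness, differentiation, and invoking the comparison lemma—is routine.
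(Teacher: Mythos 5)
The paper does not prove this lemma; it is quoted from \cite{Ames2017} and used as a black box, so there is no in-paper proof to compare against. Your approach --- existence/uniqueness of the closed-loop solution, the chain rule giving $\dot\psi = c(\bm{x}) + d(\bm{x})\pi(\bm{x}) - \alpha(\psi(\bm{x})) \ge -\alpha(\psi(\bm{x}))$, and a comparison argument --- is the standard route and is sound up to the last step.

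The genuine soft spot is exactly the one you flag but do not resolve: the differential inequality $\dot\psi \ge -\alpha(\psi)$ is only available while $\bm{x}(t)\in\mathcal{C}$, because the hypothesis $c(\bm{x})+d(\bm{x})\pi(\bm{x})\ge 0$ is assumed only on $\mathcal{C}$. Your proposed contradiction sets $t^\ast=\inf\{t\ge 0:\psi(\bm{x}(t))<0\}$ and correctly gets $\psi(\bm{x}(t^\ast))=0$ and $\psi(\bm{x}(t^\ast))\ge y(t^\ast)\ge 0$ from the comparison lemma on $[0,t^\ast]$ --- but these two facts are consistent, not contradictory. The contradiction is supposed to come from ``the inequality just past $t^\ast$,'' yet by definition of $t^\ast$ there are times arbitrarily close to $t^\ast$ from above at which $\psi<0$, i.e., $\bm{x}\notin\mathcal{C}$, and there the CBF hypothesis gives you nothing. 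So the argument as written does not close. Two standard fixes: (i) invoke Nagumo's theorem directly --- at any boundary point $\psi(\bm{x})=0$ one has $\nabla_{\bm{x}}^T\psi(\bm{x})\,\dot{\bm{x}} \ge -\alpha(0)=0$, so the closed-loop vector field lies in the tangent cone of $\mathcal{C}$ (assuming $\nabla_{\bm{x}}\psi\neq 0$ on $\{\psi=0\}$), which together with uniqueness of solutions yields forward invariance; this is precisely what the paper does in the proof of its Proposition~1 via \cite[Theorem 4.7]{Blanchini2008}. Or (ii) follow \cite{Ames2017} and require the CBF condition on an open set $\mathcal{D}\supset\mathcal{C}$, so that the differential inequality persists in a neighborhood of the boundary and the comparison argument alone suffices. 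A minor additional point: the comparison lemma for $\dot y=-\alpha(y)$ needs $\alpha$ locally Lipschitz for well-posedness of the comparison ODE; the lemma statement only says extended class $\mathcal{K}_\infty$, but Assumption~2 of the paper supplies Lipschitz continuity, so this is harmless in context.
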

Since the crucial component necessary for the application of this result is access to a valid CBF, we require the following.\looseness=-1
\begin{assumption}\label{as:CBF}
    A twice differentiable CBF $\psi(\cdot)$ and a Lipschitz continuous extended class $\mathcal{K}_{\infty}$ function $\alpha(\cdot)$ \revision{satisfying} %
    \begin{align*}
        \revision{\forall \bm{x}\in\mathcal{C}: \exists \bm{u}\in\mathcal{U}:\quad c_0(\bm{x})+\bm{c}^T(\bm{x})\bm{u}> 0.}
    \end{align*}%
    are known for the system dynamics \eqref{eq:si_sys} \revision{on the compact set $\mathcal{C}$.}
\end{assumption}
\revision{For arbitrary functions $\bm{f}(\cdot)$, $\bm{G}(\cdot)$, finding a control barrier function satisfying these requirements is generally a challenging problem.} 
However, given some knowledge about the structure of the dynamical system, e.g., internal integrator chains \revision{as commonly found in Euler-Lagrange systems,}
suitable CBFs can be iteratively constructed without knowledge of the \revision{particular functions $\bm{f}(\cdot)$ and $\bm{G}(\cdot)$} using the approach proposed in \cite{Nguyen2016ExponentialCB}. 
\revision{Note that the focus of our work is not the construction of CBFs} but rather their exploitation to ensure safety with event-triggered learning. \revision{Hence, we leave the extension to scenarios with unknown CBFs for future work similarly as related approaches in literature \cite{wu_safe_2022, zhang_gaussian_2025, zhang_real-time_2024, castaneda2023}.}

If we knew $\bm{f}(\cdot)$ and $\bm{G}(\cdot)$, these assumptions would allow us to immediately define a safety filter \cite{Ames2017}
\begin{subequations}\label{eq:qp}
\begin{align}
\bm{u}^*(x)=&\arg \min _{\bm{u} \in \mathbb{R}^m}\left\|\bm{\pi}_{\text {nom }}(\bm{x})-\bm{u}\right\|^2 \\
&\text{s.t. }c_0(\bm{x})+\bm{c}^T(\bm{x})\bm{u}\geq 0\label{eq:cbf_const_base}
\end{align}
\end{subequations}
which minimally modifies a given nominal control law $\bm{\pi}_{\mathrm{nom}}:\mathcal{X}\rightarrow\mathcal{U}$, such that \eqref{eqn:cbf} can be straightforwardly ensured. Under sufficient smoothness of $\alpha(\cdot)$, $\psi(\cdot)$ and $\bm{\pi}_{\mathrm{nom}}(\cdot)$, the resulting control law can even be shown to be locally Lipschitz continuous \cite{Ames2017}, such that \cref{lem:cbf} ensures the safety of the closed-loop system. 
However, it is not directly applicable in a setting where $\bm{f}(\cdot)$ and $\bm{G}(\cdot)$ are unknown. \revision{For example, if we do not know $\bm{G}(\cdot)$, we cannot even determine the signs of $c_i(\bm{x})$, $i=1,\ldots,m$, and do not know which directions are safe for $\bm{u}$.}\looseness=-1

To overcome this limitation, we assume the ability to collect data online such that we can learn approximations of the safety conditions \eqref{eq:cbf_const_base}.
For this data, we require the following.
\begin{assumption}\label{as:data}
    At arbitrary sampling times $t\in\real_{\ge 0}$, training data $([\bm{x}^T(t),\bm{u}(t)]^T,y(t))$ with outputs 
    \revision{$y(t)=\frac{\mathrm{d}}{\mathrm{d}t}\psi(\bm{x}(t))+\omega_t$}
    perturbed by i.i.d. Gaussian noise $\omega_t\sim\mathcal{N}(0,\sigma_{\mathrm{on}}^2)$, $\sigma_{\mathrm{on}}^2\in\mathbb\real_{>0}$, can be collected.
\end{assumption}

To enable event-triggered learning strategies, it is common to require sampling at arbitrary times \cite{Umlauft2019FeedbackLB}. \revision{In practice, this requirement is reasonably satisfied by regular sampling with a sufficiently high frequency.}
\cref{as:data} requires noise-free state measurements, while training targets $y$ can be perturbed by Gaussian noise. This is a frequently used assumption in the GP-based control literature~\cite{Jagtap2020ControlBF, zhang_event-triggered_2024} \revision{as it allows to model inaccuracies in data,  e.g., from numerical differentiation.}
\revision{The restriction to a noise scenario is necessary to quantify learning errors, but extensions to other noise distributions \cite{reed_error_2025}, non-i.i.d. data \cite{Chowdhury2017a}, and noisy state measurements \cite{dissertation_armin} are possible.}

Based on this system description and Assumptions \ref{as:CBF} and \ref{as:data}, we consider the problem of designing a sampling strategy to determine when \revision{and which} measurements need to be taken, such that safety as defined in \cref{def:safety} is guaranteed.

\section{Safe Control through Event-Triggered Learning}\label{sec:con-time}

We introduce the fundamentals of Gaussian process regression in Section \ref{subsec:GPR}. In \cref{subsec:GP_affine}, we demonstrate how control-affine system models can be inferred from data along with prediction error bounds. Based on these GP models, we present our approach for synthesizing safe control laws in \cref{subsec:synthesis}. We use this control law as a basis for \revision{the design of a safe excitation signal in \cref{subsec:trlearning}. In \cref{eq:event_trig_learn}, we prove the probabilistic safety of our event-triggered learning algorithm that switches to the excitation signal for data generation.} 
Guarantees for the exclusion of Zeno behavior with the proposed event trigger are given in Section~\ref{sec:zeno}.

\subsection{Gaussian Process Regression}
\label{subsec:GPR}
Gaussian process regression (GPR) is a statistical method based on the concept that any finite number of measurements $\{h(\bm{q}^{(1)}),\ldots,h(\bm{q}^{(N)})\}$, $N\in\mathbb{N}$, of an unknown function $h:\mathcal{Q}\rightarrow\mathbb{R}$ evaluated at 
$\bm{q}\in\mathcal{Q}$ from some index set $\mathcal{Q}$, e.g., $\mathcal{Q}=\mathbb{R}^n$, follows a joint Gaussian distribution. A GP, denoted $h(\cdot)\sim \mathcal{GP}(\hat{h}(\cdot),k_h(\cdot,\cdot)) $, is fully specified using a prior mean $\hat{h}:\mathcal{Q}\rightarrow\mathbb{R}$ and a positive definite kernel function $k_h:\mathcal{Q}\times\mathcal{Q}\rightarrow\real_{>0}$  \cite{Rasmussen2006}. The mean function incorporates prior model knowledge, 
which we set to $\hat{h}(\cdot) = 0$ if no prior knowledge about the function is available. This is also assumed in the following without loss of generality. 
The kernel function $k_h(\cdot,\cdot)$ encodes abstract information about the structure of $h(\cdot)$, such as smoothness or periodicity.

Given training data $\mathbb{D} = \{\bm{q}^{(i)},y^{(i)}\}_{i=1}^N$ consisting of $N$ training inputs $\boldsymbol{q}^{(i)} \in \mathcal{Q}$ and noisy measurements $y^{(i)}=h(\bm{q}^{(i)})+\omega^{(i)}$, $\omega^{(i)}\sim\mathcal{N}(0,\sigma_{\mathrm{on}}^2)$, $\sigma_{\mathrm{on}}^2\in \real_{>0}$, we can compute the posterior GP by conditioning the prior on $\mathbb{D}$. The resulting posterior is Gaussian with mean and variance defined by
\begin{subequations}
    \begin{align*}
\mu(\bm{q}) &=\bm{k}_h^{T}(\bm{q})\left(\bm{K}_h+\sigma_{\mathrm{on }}^{2} \bm{I}_{N}\right)^{-1} \bm{y}, \\
\sigma^{2}(\bm{q}) &=
k_h(\bm{q}, \bm{q})-\bm{k}_h^{T}(\bm{q})\left(\bm{K}_h+\sigma_{\mathrm{on}}^{2} \bm{I}_{N}\right)^{-1} \bm{k}_h(\bm{q}),
\end{align*}
\end{subequations}
where $\bm{k}_h(\bm{q})$ and $\bm{K}_h$ are defined element-wise via $k_{h,i}(\bm{q})=k_h(\bm{q},\bm{q}^{(i)})$ and $K_{h,ij}=k(\bm{q}^{(i)},\bm{q}^{(j)})$, respectively, and $\bm{y}=[y^{(1)}\ \cdots\ y^{(N)}]^T$. 

\subsection{Learning Models of Control-Affine Systems}
\label{subsec:GP_affine}
\revision{Since the left-hand side in \eqref{eq:cbf_const_base} is a control-affine function,} our learned system model should exploit this knowledge and provide a model of the same structure. Including this \revision{information} can be straightforwardly achieved with GPs by employing composite kernels for regression \cite{Lederer2021HowTD}. For this purpose, we define a GP prior for \revision{$c_i(\cdot)$, $i=0,\ldots,m$ in \eqref{eq:cbf_const_base}} such that\looseness=-1
    \begin{align}
    c_i(\cdot)\sim \mathcal{G P}\left(0, k_{i}\left(\cdot, \cdot\right)\right).
    \label{eq:gprior}
\end{align}
This implies that the composite prior $h(\cdot)\sim\mathcal{GP}(0, k_h(\cdot,\cdot))$ for $h(\bm{q})=c_0(\bm{x})+\bm{c}^T(\bm{x})\bm{u}$ with $\bm{q}=[\bm{x}^T,\bm{u}^T]^T$ is defined via the composite kernel \cite{Lederer2021HowTD},\looseness=-1
\begin{align*}
    k_h(\bm{q},\bm{q})&=k_0(\bm{x},\bm{x}')+ \sum\limits_{i=1}^{m} u_i k_{i}(\bm{x},\bm{x}')u_i'.
\end{align*}
Using these priors, it is straightforward to derive the posterior distributions of the functions $c_i(\cdot)$, $i=0,\ldots,m$ analogously to standard GPR by conditioning the joint prior of the individual functions on the training data. 
The resulting posterior distributions are again Gaussian with means and variances \cite{Lederer2021HowTD}
\begin{subequations}
    \begin{align}
\!\mu_{0}(\bm{x})&=\bm{k}_{0}^{T}(\bm{x})\left(\bm{K}_h+\sigma_{\mathrm{on}}^{2} \bm{I}_{N}\right)^{-1} \bm{y},
\label{eq:muf}
\\
\!\mu_{i}(\bm{x})&=\bm{k}_{i}^{T}(\bm{x}) \bm{U}_i\left(\bm{K}_h+\sigma_{\mathrm{on}}^{2} \bm{I}_{N}\right)^{-1} \bm{y}
\label{eq:mug}
\\
\!\sigma_{0}^{2}(\bm{x})&=k_{0}(\bm{x}\!, \bm{x})\!-\!\bm{k}_{0}^{T}\!(\bm{x}) \left(\bm{K}_h\!+\!\sigma_{\mathrm{on}}^{2} \bm{I}_{N}\right)^{-1}\! \bm{k}_{0}(\bm{x}),\label{eq:sigmaf}
\\
\!\sigma_{i}^{2}(\bm{x})&=k_{i}(\bm{x},\! \bm{x})\!-\!\bm{k}_{i}^{T}\!(\bm{x}) \bm{U}_i\!\left(\bm{K}_h\!+\!\sigma_{\mathrm{on}}^{2} \bm{I}_{N}\right)^{-1} \bm{U}_i \bm{k}_{i}(\bm{x})\!\label{eq:sigmag}
\end{align}
\end{subequations}
for $i=1,\ldots,m$, where $\bm{U}_i=\operatorname{diag}([u_i^{(1)} \ldots u_i^{(N)}])$ and $\bm{k}_i(\bm{x})$ are defined analogously to $\bm{k}_h(\bm{x})$. In order to exploit the Bayesian foundations of GP regression for the derivation of prediction error bounds, we make the following assumption.
\begin{assumption}\label{as:prior}
    The unknown functions $c_i(\cdot)$, $i=1,\ldots,m$, are samples from prior GPs \eqref{eq:gprior} defined using stationary kernels $k_i(\cdot,\cdot)$, i.e., they are functions of the difference of their arguments satisfying $k_i(\bm{x},\bm{x})=s_i^2$ for all $\bm{x}\in\mathcal{X}$ with signal variances $s_i^2\in\mathbb{R}_{\geq0}$. Moreover, the kernels have continuous partial derivatives up to the fourth order. 
\end{assumption}
This assumption of suitable prior distributions is commonly employed when working with Bayesian models, see e.g., \cite{Dhiman2021CB, wang_safe_2018}. While it effectively limits the admissible class of unknown functions to the sample space of the GP prior, this restriction is often not severe, in particular when working with universal kernels \revision{capable of approximating continuous functions arbitrarily well \cite{Rasmussen2006, Micchelli2006UniversalK}. The additional restrictions on the kernels, i.e., stationarity and a sufficient smoothness are generally not restrictive, and they are satisfied, e.g., by the frequently used squared exponential kernel, which are also universal.}
Therefore, \cref{as:prior} does not pose a significant limitation. 
Based on this assumption, we introduce next prediction error bounds for $\mu_0(\cdot)$ and $\mu_{i}(\cdot)$.\revision{\footnote{\revision{Proofs of all theoretical results in this section appear in the Appendix.}}}
\begin{lemma}\label{lem:GPbound}
    Consider unknown functions $\{c_i(\cdot)\}_{i=0}^m$ \revision{and training data satisfying Assumptions \ref{as:data} and \ref{as:prior}. For $i=0,\ldots,m$, let $\underline{\sigma}_i\in\mathbb{R}_{\geq0}$, $\delta\in(0,1)$ be parameters and define
    \begin{align}\label{eq:beta_i}
        \beta_i(\delta, \underline{\sigma}_i)&= 8\log\bigg(\frac{2(m+1)}{\delta}\prod\limits_{j=1}^n\Big(1+ \max\limits_{\bm{x}\in\mathcal{X}}x_j\!-\!\min\limits_{\bm{x}\in\mathcal{X}}x_j\Big) \bigg) 
        \nonumber \\
        & \quad + 8n \log\bigg(1+\frac{\sqrt{n}(L_i+L_{\mu_i}+L_{\sigma_i})}{2\underline{\sigma}_i}\bigg),
    \end{align}
    where $L_{\mu_i}$, $L_{\sigma_i}$ are Lipschitz constants of $\mu_i(\cdot)$, $\sigma_i(\cdot)$, and $L_i$ are as specified in \cite[Theorem 3.2]{Lederer2019UniformEB}.
    }
    Then, the prediction error of GP regression is jointly bounded by 
            \begin{align}
        |\mu_{i}(\bm{x})-c_i(\bm{x})|&\leq \revision{\sqrt{\beta_{i}(\delta, \underline{\sigma}_i)} \tilde{\sigma}_i(\bm{x})},
    \label{eq:GPboundg}
    \end{align}
    for all $\bm{x}\in\mathcal{X}$, $i=0,\ldots,m$ with probability of at least $1-\delta$ \revision{with $\tilde{\sigma}_i(\bm{x})=\max\{\sigma_i(\bm{x}),\underline{\sigma}_i \},$ $i=0,\ldots,m$.}
\end{lemma}

By exploiting the Bayesian foundation of GPs, this result provides us with probabilistic uniform prediction error bounds for $c_i(\cdot)$, $i=0,\ldots,m$, individually. The error bounds in the right-hand side of~\eqref{eq:GPboundg} have only $\underline{\sigma}_i$ as design parameters, which have an intuitive interpretation: they \revision{allow us to specify} a lower bound on the certifiable prediction error. 
\subsection{Safe Learning-Based Control Synthesis}
\label{subsec:synthesis}
While we assume to not have access to the functions $\{c_i(\cdot)\}_{i=0}^m$, GP regression allows us to infer models in the form of $\mu_i(\cdot)$ from training data. Since these models come along with error bounds, cf. \cref{lem:GPbound}, we robustify the CBF condition \eqref{eq:cbf_const_base} to account for model uncertainty
\revision{and formulate our control law as follows
\begin{subequations}\label{eq:opt_QP}
\begin{align}
\bm{\pi}(\bm{x})=&\min_{\bm{u}} ~  \left\|\bm{u}-\bm{\pi}_{\mathrm{nom}}(\bm{x})\right\| \\
  &\text{s.t.}~\hat{c}_0(\bm{x})+\hat{\bm{c}}^T(\bm{x})\bm{u}-\|\bm{Q}(\bm{x})\bm{u}\|\geq 0,  \label{eq:cs_cons}
\end{align}
\end{subequations}
where 
\begin{align*}
    \hat{c}_0(\bm{x})&=\mu_0(\bm{x})-\sqrt{\beta_0(\delta, \underline{\sigma}_0)}\tilde{\sigma}_0(\bm{x}),\\
    \hat{\bm{c}}(\bm{x})&=\begin{bmatrix}
        \mu_1(\bm{x})&\cdots&\mu_m(\bm{x})
    \end{bmatrix}^T,\\
    \bm{Q}(\bm{x})&=\sqrt{m}\mathrm{diag}(\sqrt{\beta_1(\delta, \underline{\sigma}_1)}\tilde{\sigma}_1(\bm{x}),\ldots,\sqrt{\beta_m(\delta, \underline{\sigma}_m)}\tilde{\sigma}_m(\bm{x})).
\end{align*}
Note that our formulation resembles the structure found in related work \cite{castaneda2023, zhang_real-time_2024, Long2022SafeCS}. In fact, the optimization problem \eqref{eq:opt_QP} can be reformulated as a SOCP \cite{castaneda2023}. However, it differs in the expression for $\bm{Q}(\bm{x})$, for which we exploit the individual error bound in \cref{lem:GPbound}. This crucial difference allows us a simplification of feasibility conditions as shown in the following result.
}\looseness=-1
\begin{proposition}\label{prop:safety}
Consider a system \eqref{eq:si_sys}, GP priors \eqref{eq:gprior} and a fixed data set $\mathbb{D}$ such that Assumptions  
\ref{as:CBF} - \ref{as:prior} are satisfied. \revision{Assume that for all $\bm{x}\in\mathcal{C}$, there exists $i=1,\ldots,m$ such that 
\begin{align}\label{eq:GP_cond}
    |\mu_i(\bm{x})|> \sqrt{m\beta_i(\delta, \underline{\sigma}_i)}\tilde{\sigma}_i(\bm{x}).
\end{align}}
Then, \eqref{eq:opt_QP} is feasible and ensures safety for all $t\in\mathbb{R}_{\geq0}$ with probability $1-\delta$.
\end{proposition}

Proposition~\ref{prop:safety} allows us the straightforward design of a safe control law, even though we only have access to the learned models $\mu_i(\cdot)$. For achieving this, 
we merely need to check condition
\eqref{eq:GP_cond}, which essentially requires a sufficiently small posterior standard deviation $\sigma_i(\bm{x})$ for all $\bm{x}\in\mathcal{C}$ together with a sufficiently small value $\underline{\sigma}_i$. \revision{Note that the safety filter \eqref{eq:opt_QP} generally does not preserve stability guarantees of the nominal control law $\bm{\pi}_{\mathrm{nom}}(\cdot)$ similar to related work \cite{Jagtap2020ControlBF, zhang_gaussian_2025, zhang_real-time_2024, castaneda2023}, but the additional consideration of a suitable stability constraint can resolve this weakness \cite{Mestres2023FeasibilityAR}.}

\subsection{\revision{Excitation Requirement for Guaranteed Accuracy}}
\label{subsec:trlearning}
While \cref{prop:safety} provides a way 
to ensure safety using learned GP models, it requires that \eqref{eq:GP_cond} is satisfied for all $\bm{x}\in\mathcal{X}$. This assumption can be challenging to ensure and formally showing that it holds requires determining the global minimum of the left-hand side of \eqref{eq:GP_cond}, which is a non-convex function in general.
Thus, an intuitive approach is to update the GP model online to improve its accuracy.
\revision{By choosing a suitable control input for data collection, this approach ensures} that \eqref{eq:GP_cond} continues to be \revision{probabilistically} satisfied \revision{for at least one $i=1,\ldots,m$} after an update \revision{without compromising safety. This is formalized in the following result.}\looseness=-1

\begin{proposition}\label{prop:learn_policy}
Consider a system \eqref{eq:si_sys}, GP priors \eqref{eq:gprior} and a fixed data set $\mathbb{D}$ with $N\in\mathbb{N}$ data points such that Assumptions  
\ref{as:CBF} - \ref{as:prior} are satisfied.
\revision{Given $i \in \{1,\ldots,m\}$, define the control law 
$\bar{\bm{\pi}}_i:\mathbb{R}^n\rightarrow\mathbb{R}^m$ via $\bar{\pi}_{i,j}(\bm{x})=0$ for $j\neq i$ and
\begin{align}\label{eq:opt_trig}
    \bar{\pi}_{i,i}(\bm{x})&=\begin{cases}
        \mathrm{sgn}(\xi_i(\bm{x}))\bar{u}_{\mathrm{GP}}(\bm{x}) &\text{if } \frac{-\hat{c}_0(\bm{x})}{|\xi_i(\bm{x})|}\leq \bar{u}_{\mathrm{GP}}(\bm{x})\\
        \frac{|\hat{c}_0(\bm{x})|}{\xi_i(\bm{x})} &\text{else}
    \end{cases}
\end{align}
where}
\begin{align}
    \revision{\xi_i(\bm{x})}&\revision{=\mu_i(\bm{x})-\mathrm{sgn}(\mu_i(\bm{x}))\sqrt{m\beta_i(\delta, \underline{\sigma}_i)}\tilde{\sigma}_i(\bm{x})}\\
    \bar{u}_{\mathrm{GP}}(\bm{x})&=\sqrt{ \frac{(1\!+\!\epsilon\!+\!\gamma\!+\!\revision{\frac{1}{\sqrt{m}}})^2\revision{m}\beta_i(\delta,\underline{\sigma}_i^{+})s_i^2(s_0^2\!+\!s_\mathrm{on}^2)}{\revision{\epsilon^2 \beta_i(\delta,\underline{\sigma}_i)\tilde{\sigma}_i^2(\bm{x})}}}\label{eq:underline_u_GP}
\end{align}
\revision{and $\underline{\sigma}_i^{+}\in\mathbb{R}_{>0}$ is defined such that 
\begin{align}\label{eq:sigma_under_cond}
    \sqrt{\beta_i(\delta, \underline{\sigma}_i^{+})} \underline{\sigma}_i^{+}\leq \frac{\epsilon\sqrt{\beta_i(\delta, \underline{\sigma}_i)} \tilde{\sigma}_i(\bm{x})}{\sqrt{m}(1+\epsilon+\gamma)+1}
\end{align}
holds for some $\epsilon,\gamma\in\mathbb{R}_{>0}$ and $0<\underline{\sigma}_i\leq s_i^2=k_i(\bm{x},\bm{x})$.
If
\begin{align}\label{eq:prior_safety_margin}
    \frac{|\mu_i(\bm{x})|}{\sqrt{m\beta_i(\delta, \underline{\sigma}_i)}\tilde{\sigma}_i(\bm{x})}\geq 1+\epsilon 
\end{align}
holds at $\bm{x}\in\mathcal{C}$, then, $\bar{\bm{\pi}}_i(\cdot)$
satisfies \eqref{eq:cbf_const_base} at $\bm{x}$ with probability $1-\delta$.} Moreover, the GP models with mean $\mu_i^+(\cdot)$ and standard deviation $\sigma_i^+(\cdot)$ trained using an updated data set $\mathbb{D}\cup \{[\bm{x},\bar{\bm{\pi}}_i(\bm{x})]^T,y\}$ guarantees with probability $1-\delta$ that
\begin{align}\label{eq:cond_event}
    \frac{|\mu_i^+(\bm{x})|}{\sqrt{\beta_i(\delta,\underline{\sigma}_i^{+})}\tilde{\sigma}_i^+(\bm{x})} \geq 1 + \epsilon + \gamma.
\end{align}
\end{proposition}
\revision{\cref{prop:learn_policy}} requires condition \eqref{eq:sigma_under_cond}
for $\underline{\sigma}_i^{\revision{+}}$ to probabilistically enable a sufficient reduction of $\tilde{\sigma}_i^+(\cdot)$ through a training point. 
\revision{Note that a positive value $\underline{\sigma}_i^{+}$ satisfying \eqref{eq:sigma_under_cond} is guaranteed to exist due to the logarithmic dependency of $\beta(\cdot,\cdot)$ on this parameter, such that $\beta(\delta,\underline{\sigma}_i^{+})$ is finite.}
\revision{Condition \eqref{eq:prior_safety_margin} probabilistically guarantees that a safe control direction exists taking the role of \eqref{eq:GP_cond} in \cref{prop:safety}. These properties are exploited in the excitation filter \eqref{eq:opt_trig}} 
by restricting the control input to amplitudes \revision{$|\bar{\pi}_{i,i}(\bm{x})|\geq \bar{u}_{\mathrm{GP}}(\bm{x})$, such that enough information about $c_i(\cdot)$ can be extracted to safely ensure \eqref{eq:cond_event} with probability $1-\delta$.} 
\revision{Since \eqref{eq:GP_cond} only needs to be satisfied for at least one $i=1,\ldots,m$, this approach increases the overall safety margin $\max_{i=1,\ldots,m}\nicefrac{|\mu_i(\bm{x})|}{\sqrt{\beta_i(\delta,\underline{\sigma}_i}\bar{\sigma}_i(\bm{x})}-1$.}
Importantly, the \revision{required control amplitude} $\bar{u}_{\mathrm{GP}}(\bm{x})$ can easily be computed and depends on \revision{positive, finite values.} 
\revision{Hence, \cref{prop:learn_policy} implies the sufficiency of finite control inputs.}\looseness=-1

\subsection{Safe Event-Tiggered Gaussian Process Learning}\label{eq:event_trig_learn}
Since the excitation filter generally results in a worse closed-loop performance, it is desirable to use $\bar{\bm{\pi}}_i(\cdot)$ only when we collect training samples. Therefore, we propose to distinguish two phases with different goals in our overall control approach:
\begin{itemize}
    \item When we are at risk of violating \eqref{eq:GP_cond}, we focus on improving the model accuracy and employ the excitation filter \eqref{eq:opt_trig} to generate informative training samples. We determine the necessity of GP model updates through an event trigger. Given $N\in\mathbb{N}$, 
    \begin{subequations}\label{eq:event_trigger}
    \begin{align}
        t_{N+1}=&\inf\limits_{t>t_N} t\\
        &\text{s. t. } \revision{\max_{j=1,\ldots,m}}\frac{\mu_j(\bm{x}(t))}{\sqrt{m\beta_j(\delta,\underline{\sigma}_j)}\tilde{\sigma}_j(\bm{x}(t))} \leq 1 \!+\! \epsilon,\label{eq:trigger}
    \end{align}
    \end{subequations}
    which is used to activate the excitation filter \eqref{eq:opt_trig}, i.e., $\bm{u}(t) =\bar{\bm{\pi}}_i(\bm{x}(t))$ at $t=t_{N}$, 
    \revision{with $i$  the maximizer in \eqref{eq:trigger}.}
    Therefore, new training samples have the form $([\bm{x}^T(t_{N+1}),\bar{\bm{\pi}}_i(\bm{x}^T(t_{N+1}))]^T,y)$. 
    \item When the model accuracy is sufficient, i.e., \eqref{eq:trigger} is not satisfied, we focus on control performance and directly apply the safe policy \eqref{eq:opt_QP} 
    which minimizes the deviation from the nominal control law $\bm{\pi}_{\mathrm{nom}}(\cdot)$. Thus, we choose the control signal in this phase as $\bm{u}(t)=\bm{\pi}(\bm{x}(t))$ for all $t\in(t_N,t_{N+1}), N\in\mathbb{N}$.
\end{itemize}
This event-triggered learning approach, which is summarized in \cref{alg:control_algorithm}, \revision{probabilistically} guarantees safety.

\begin{algorithm}[t]
    \begin{algorithmic}[1]
        \State $N\gets 1$
        \State \revision{set confidence level $\tilde{\delta}\in(0,1)$}
        \State \revision{pick $\underline{\sigma}_j\leq s_j$, for all $j=0,\ldots,m$}
        \While{\textbf{true}}
            \If{$\revision{\max_{j=1,\ldots,m}}\frac{\mu_{\revision{j}}(\bm{x}(t))}{\sqrt{\revision{m\beta_j(\nicefrac{6\tilde{\delta}}{\pi^2N^2},\underline{\sigma}_j)}}\tilde{\sigma}_{\revision{j}}(\bm{x}(t))} = 1 + \epsilon$}
                \State \revision{set $i$ as maximizer of \eqref{eq:trigger}}
                \State \revision{determine $\underline{\sigma}_i^{+}$ according to \eqref{eq:sigma_under_cond}}
                \State apply $\bm{u}=\bar{\bm{\pi}}_{\revision{i}}(\bm{x})$ with $\bar{\bm{\pi}}_{\revision{i}}(\cdot)$ defined in \eqref{eq:opt_trig}
                \State measure $y=\revision{c_0(\bm{x}(t_N))+\bm{c}^T(\bm{x}(t_N))\bm{u}}+\omega$
                \State $\mathbb{D}\gets \mathbb{D}\cup \{ ([\bm{x}^T(t_N), \bm{u}]^T,y \}$
                \State update $\mu_{\revision{j}}(\cdot)$, $\sigma_{\revision{j}}(\cdot)$ \revision{for all $j=0,\ldots, m$  }
                \State $t_N\gets t$, $N\gets N+1$, $\underline{\sigma}_i\gets\underline{\sigma}_i^{+}$
            \Else
                \State \revision{compute safety-filtered control $\bm{\pi}(\bm{x})$ using \eqref{eq:opt_QP}}
                \State apply $\bm{u}=\bm{\pi}(\bm{x})$
            \EndIf   
        \EndWhile
    \end{algorithmic}
    \caption{\revision{Safe Event-Triggered Learning for Control}}\label{alg:control_algorithm}
\end{algorithm}
\begin{theorem}\label{th:safe_learn}
    Consider a system \eqref{eq:si_sys} and GP priors \eqref{eq:gprior} such that Assumptions  \ref{as:CBF} - \ref{as:prior} are satisfied. 
    If \eqref{eq:prior_safety_margin} holds at $t=t_0$ and $\bm{x}(t_0)\in\mathcal{C}$, \revision{executing} \cref{alg:control_algorithm} \revision{with confidence $\tilde{\delta}\in(0,1)$} yields a well-defined control law and guarantees safety during the time interval \revision{$[t_0,t_{\bar{N}})$} with probability $1-\tilde{\delta}$ \revision{for all $\bar{N}\in\mathbb{N}$.}
\end{theorem}

This result guarantees that the system stays in the safe set \revision{for an arbitrarily high number of GP updates $N\in\mathbb{N}$ with probability $1-\tilde{\delta}$} if it starts inside it at a point where \eqref{eq:prior_safety_margin} is satisfied. \revision{Providing a probabilistic guarantee over multiple GP updates is achieved by decreasing the admissible violation probability $\delta$ at a rate of $\nicefrac{6\tilde{\delta}}{\pi^2N^2}$, such the union bound over all $N\in\mathbb{N}$ bounds the joint violation probability as the sum of individual ones, which yields $\tilde{\delta}$.}
The required condition \revision{\eqref{eq:prior_safety_margin} for the initial state} is necessary to prevent any safety violations from the control before any safe learning is possible. This is easily satisfied at the initial condition $\bm{x}(0)$ by initializing the GP model with a suitable prior or by providing training data obtained a priori, e.g., by running a locally safe controller.

\subsection{Ruling out Zeno Behavior}\label{sec:zeno}
Avoiding Zeno behavior is crucial in event-triggered learning with GPs to prevent updates from accumulating in a finite amount of time, which would make a practical implementation impossible. In order to exclude Zeno behavior in event-triggered learning, it is sufficient to uniformly lower bound the time between any two consecutive events by a positive constant. This approach is employed in the following result.
\begin{proposition}\label{prop:zeno}
    Consider a system \eqref{eq:si_sys} and GP priors \eqref{eq:gprior} such that Assumptions  \ref{as:CBF} - \ref{as:prior} are satisfied. 
    If \eqref{eq:prior_safety_margin} holds at $t=t_0$ and $\bm{x}(t_0)\in\mathcal{C}$, \cref{alg:control_algorithm} triggers model updates with an inter-event time $\Delta_N=t_{N+1}-t_N$ satisfying $\Delta_N\geq \nicefrac{\gamma}{L_{\Gamma_N}}$ for all $N\in\mathbb{N}$ with probability $1-\tilde{\delta}$, where $L_{\Gamma_N}$ denotes the Lipschitz constant of the triggering function 
    \begin{equation*}
        \Gamma_N (t) = \revision{\max_{j=1,\ldots,m}\frac{\mu_j(\bm{x}(t))}{\sqrt{m\beta_j(\nicefrac{6\tilde{\delta}}{\pi^2N},\underline{\sigma}_j)}\tilde{\sigma}_j(\bm{x}(t))}.}
    \end{equation*}
\end{proposition}

This result exploits the Lipschitzness of the state trajectory together with the GP means $\mu_i(\cdot)$ and the standard deviations $\sigma_i(\cdot)$ to obtain a Lipschitz constant $L_{\Gamma}$. Since update events are triggered with a higher threshold than \revision{probabilistically} ensured after the update, the Lipschitz continuity of $\Gamma_N(\cdot)$ directly implies a lower bound on the inter-event times. The constant $L_{\Gamma}$ captures the change rate of the system and the dependency on the GP prior. Intuitively, when the system or the GP model exhibit a high variability, $L_{\Gamma}$ is large, so that a high triggering frequency can occur. This can be compensated by increasing the gap between the trigger condition and the update objective, which is given by the constant $\gamma$. \revision{This parameter directly influences the magnitude of control inputs of the excitation filter due to \eqref{eq:underline_u_GP}.} Hence, it allows us to effectively tune the inter-event times of \cref{alg:control_algorithm} \revision{via the excitation control signal.}\looseness=-1

\section{Numerical Evaluation in Adaptive Cruise Control} \label{sec:numerical_eval}
\definecolor{azrablue}{RGB}{0, 103, 180}
\definecolor{ayellow}{RGB}{255, 195, 0}
\definecolor{orang}{RGB}{160, 50, 50}
\definecolor{gr}{RGB}{85, 139, 47}

We demonstrate the effectiveness of our framework by considering the example of adaptive cruise control, \revision{which we model using $\bm{f}(\bm{x})=\begin{bmatrix}
    -\nicefrac{F_r(v)}{m}& v_0-v
\end{bmatrix}^T$, and $\bm{G}(\bm{x})=\begin{bmatrix}
    0 & 1/m
\end{bmatrix}^T$,}
where the state $\bm{x} \!= \!\begin{bmatrix}
 v \! & z  
\end{bmatrix}^T \! \in \mathbb{R}^2$ is composed of the distance to the front vehicle $z$ and the ego vehicle velocity $v$. The parameter $m\!=\!1650$ corresponds to the ego vehicle's mass, $v_0$ is the front vehicle's velocity and $F_r(\tilde{v})=f_0+f_1 (\tilde{v}+v_0)+f_2 (\tilde{v}+v_0)^2$ is the rolling resistance force on the ego vehicle with parameters $f_0=0.2$, $f_1=10$ and $f_2=0.5$. 
The objective is to reach the desired velocity $v_d$, for which we design a nominal velocity controller $\pi_{\mathrm{nom}}(\bm{x})=-20(v-v_d)$. Since it is crucial that no collision with the front vehicle occurs, we define a CBF $\psi(\bm{x}) = z - T_h v,$
where $T_h=1.8$ is the lookahead time and use $\alpha(\psi)= 65 \psi$, \revision{such that \cref{as:CBF} holds}. As we assume that the functions $c_0(\cdot)$ and $c_1(\cdot)$ are unknown, we model their behavior by putting a prior GP distribution $\mathcal{GP}\left(0, k_{\revision{0}}(\cdot, \cdot)\right)$ and $\mathcal{GP}\left(0, k_{\revision{1}}(\cdot, \cdot)\right)$ on them. For $k_{\revision{0}}(\cdot, \cdot)$ and $k_{\revision{1}}(\cdot, \cdot)$, we employ the squared exponential kernel, whose hyperparameters are set to $l_{\revision{0}} = 1$, $l_{\revision{1}} = 2$, $s_{\revision{0}} =1$, $s_{\revision{1}} =0.5$. \revision{This choice aligns with \cref{as:prior} as these kernels are universal \cite{Micchelli2006UniversalK} and $c_i(\cdot)$ are analytic functions.} In order to comply with the required feasibility of the CBF condition at $t=0$ in \cref{th:safe_learn}, we initialize the composite GP model \eqref{eq:muf} - \eqref{eq:sigmag} with one training point before starting \cref{alg:control_algorithm}. All training samples that we obtain through our proposed framework are perturbed by Gaussian noise with standard deviation $\sigma_{\mathrm{on}} = 0.01$, \revision{such that \cref{as:data} is satisfied}. 
Furthermore, we choose \revision{$\delta=0.01$ and  $\underline{\sigma}_{0,1}=0.01$ for GP error bounds in \eqref{eq:beta_i}. }
When executing \cref{alg:control_algorithm}, we 
choose $\epsilon = 0.2$ and $\gamma = 0.5$ for the design parameters. To demonstrate the effectiveness of the proposed safe control approach through event-triggered learning, we apply \cref{alg:control_algorithm} in a setting with conflicting goals of the nominal controller $\pi_{\mathrm{nom}}(\cdot)$ and the safety conditions by setting $v_0=14$ and $v_d=24$, i.e., the desired velocity $v_d$ is higher than the front vehicle's velocity $v_0$.\looseness=-1

\begin{figure}
    \centering
    \definecolor{nicegreen}{RGB}{0,200,0} 
    \pgfplotsset{width=10\columnwidth/10, compat=1.13,
        height=43\columnwidth/100, grid=major,
        legend cell align=left, ticklabel style={font=\small\sffamily},
        every axis label/.append style={font=\small\sffamily},
        legend style={font=\small\sffamily}, title style={yshift=-7pt, font=\small\sffamily}}
    \centering
    \begin{tikzpicture}
      \node[name=plotvirtual] at (0,0) {
        \begin{tikzpicture}
          \begin{axis}[
              name=plot1,
              height=3.6cm, width=\columnwidth,
              grid=none,
              xmin=0, xmax=30,
              ymin=12, ymax=26,
              ylabel={$v$},
              legend columns=1,
              legend style={font=\footnotesize\sffamily, legend cell align=left, align=left, draw=white!15!black},
              xticklabels={,,}
            ]
            \addplot[thick,azrablue,forget plot] table[x=t,y=x]{Plots11072025/v_11072025.txt};
            \addplot[thick,ayellow,dashed,forget plot] table[x=t,y=x]{Plots08072025/v_real.txt};
            \addplot[thick,amethyst,dash dot,forget plot] table[x=t,y=x]{Plots08072025/v_time_tr_new_.txt};
            \addplot[red,thick,dashed,forget plot] coordinates{(0,24) (30,24)};
            \addplot[gr,line width=1.2pt,dashed,forget plot] coordinates{(0,14) (30,14)};
            \addlegendimage{thick,red,dashed}
            \addlegendentry{Desired velocity}
            \addlegendimage{gr,line width=1.2pt,dashed}
            \addlegendentry{Front vehicle velocity}
          \end{axis}
          \begin{axis}[
              name=plot2,
              at=(plot1.east), anchor=east, yshift=-2.15cm,
              height=3.6cm, width=\columnwidth,
              grid=none,
              xmin=0, xmax=30,
              ymin=20, ymax=105,
              xlabel={$t$},
              ylabel={$z$},
              xlabel shift={-3pt},
              legend columns=1,
              legend style={font=\footnotesize\sffamily, legend cell align=left, align=left, draw=white!15!black}
            ]
            \addplot[thick,azrablue] table[x=t,y=x]{Plots11072025/z_11072025.txt};
            \addplot[thick,ayellow,dashed] table[x=t,y=x]{Plots08072025/z_real.txt};
            \addplot[thick,amethyst,dash dot] table[x=t,y=x]{Plots08072025/z_time_tr_new.txt};
            \addlegendimage{thick,azrablue}
            \addlegendentry{CBF w. GP model}
            \addlegendimage{thick,ayellow,dashed}
            \addlegendentry{CBF w. exact model}
            \addlegendimage{thick,amethyst,dash dot}
            \addlegendentry{Time-triggered learning}
          \end{axis}
        \end{tikzpicture}
      };
    \end{tikzpicture}
    \caption{
    The continuous-time control approach in \cref{alg:control_algorithm} with event-triggered online learning results in almost identical velocity (top) and distance (bottom) trajectories as a CBF-based controller with exact model knowledge in contrast to an analogous approach with periodically updated GP model, which becomes infeasible after $\approx 6\,$s.
    }
    \label{fig:trajectories_cont}
\end{figure}

\begin{figure}[t!]
    \centering
    \definecolor{nicegreen}{RGB}{0,200,0} 
    \pgfplotsset{width=10\columnwidth /10, compat = 1.13, 
        height = 43\columnwidth /100, grid= major, 
        legend cell align = left, ticklabel style = {font=\small\sffamily},
        every axis label/.append style={font=\small\sffamily},
        legend style = {font=\small\sffamily},title style={yshift=-7pt, font = \small\sffamily} }

    \centering
    \begin{tikzpicture} 
    \node[name=plotvirtual] at (0,0) {\begin{tikzpicture}
    \begin{axis}[
    name=plot1,
    height = 3.6cm, width =\columnwidth,
    grid=none,
    xmin=0, xmax=30,
    ymin=-30000, ymax=30000,
    ylabel={$u$},
    xlabel={$t$},
    xlabel shift={-3pt},
    legend columns=1,
    legend style={font=\footnotesize\sffamily , legend cell align=left, align=left, draw=white!15!black},
    ]

        \addplot[thick, black] table[x=t, y=x] {Plots11072025/us_11072025.txt};
        \addplot[only marks, mark=star, mark size=0.8, mark options={draw=red, fill=red}] table[x=t, y=x] {Plots11072025/us_tr_11072025.txt};
        \addplot[only marks, mark=star, mark size=0.8, mark options={draw=blue, fill=blue}] table[x=t, y=x] {Plots25072025/us_tr_cast_08072025.txt};
        \addlegendentry{Control input}
        \addlegendentry{Update event}
        \addlegendentry{Update event \cite{castaneda2023}}
    \end{axis}
    \end{tikzpicture}};
    \end{tikzpicture}
    \caption{The event-triggered update scheme in \cref{alg:control_algorithm} \revision{probabilistically} ensures the necessary model accuracy for guaranteeing safety by sampling data until the system state has converged. 
    \revision{The excitation filter ensures that data is generated using sufficiently large control amplitudes $\bar{u}_{\mathrm{GP}}(\bm{x})$ with high probability.}
    \revision{When the approach from \cite{castaneda2023} is applied, Zeno behavior occurs and essentially stops the simulation after $\approx 3s$. }
    }
    \label{fig:cbf_u_cont}
\end{figure}

\begin{figure}[t!]
    \centering
    \definecolor{nicegreen}{RGB}{0,200,0} 
    \pgfplotsset{width=10\columnwidth /10, compat = 1.13, 
        height = 43\columnwidth /100, grid= major, 
        legend cell align = left, ticklabel style = {font=\small\sffamily},
        every axis label/.append style={font=\small\sffamily},
        legend style = {font=\small\sffamily},title style={yshift=-7pt, font = \small\sffamily} }
    \centering
    \begin{tikzpicture} 
    \node[name=plotvirtual] at (0,0) {\begin{tikzpicture}
    \begin{axis}[
    name=plot1,
    height = 3.6cm, width =\columnwidth,
    grid=none,
    ymin=16355, ymax=42444,
    xmin=0.027, xmax=0.085,
    ylabel={$\max_t|u(t)|$},
    xlabel={$\min_N t_{N+1}-t_N$},
    legend columns=1,
    xlabel shift={-3pt},
    xtick={0.03,0.04,0.05,0.06,0.07,0.08},
    scaled x ticks=false, 
    xticklabel style={/pgf/number format/fixed},
    legend style={font=\footnotesize\sffamily , legend cell align=left, align=left, draw=white!15!black},
    ]

       \addplot[name path=lower, draw=none]
  table[x=lower_t, y=lower_u]{Plots22072025/stats_merged.txt};
\addplot[name path=upper, draw=none]
  table[x=upper_t, y=upper_u]{Plots22072025/stats_merged.txt};

\addplot[fill=azrablue!30, draw=none]
  fill between[of=lower and upper];

\addplot[thick, black]
  table[x=mean_t, y=mean_u]{Plots22072025/stats_merged.txt};

  \draw[-{Latex}] (axis cs: 0.045, 25000) -- (axis cs: 0.065, 33000);
  \node at (axis cs: 0.054,33000) {$\gamma \uparrow$};

    \end{axis}

    \end{tikzpicture}};
    \end{tikzpicture}
    \caption{\revision{The parameter $\gamma$ allows to trade-off the admissible peak control magnitude $\max_t|u(t)|$ and the minimum inter-event time $\min_N T_{N+1}-t_N$.
    }}
    \label{fig:u_t_max}
\end{figure}

The resulting state trajectories are depicted in \cref{fig:trajectories_cont}. In the top plot, we observe that the velocity of the vehicle steadily approaches $v_d$ until there is a noticeable reduction in distance, as shown in the bottom plot. Due to the CBF, the speed of the vehicle converges to the speed of the ego vehicle $v_0$, while ensuring a safe distance. 
This behavior can be observed independently of the prior availability of exact model knowledge, which is due to our proposed strategy for event-triggered learning strategy. As depicted in 
\cref{fig:cbf_u_cont}, the triggering condition \eqref{eq:event_trigger} generates data at an almost constant rate at the beginning in order to achieve the necessary model accuracy. For ensuring a sufficient information gain with each of these samples, the magnitude of the control input resulting from the CBF-QP \eqref{eq:opt_QP} is adapted in the safe direction. 
Note that \cref{alg:control_algorithm} not only triggers update events when close to the constraint boundary as maintaining the feasibility of the CBF-QP \eqref{eq:opt_QP} is crucial regardless of our distance from the boundary. When the dynamical system has almost reached a stationary point after $t=10$, the triggering stops. Thereby, merely $50$ data points are necessary to \revision{probabilistically} ensure safety using our event-triggered learning approach. This is in strong contrast to a time-triggered online version without excitation filter. It can be clearly seen in
\cref{fig:cbf_u_cont} that periodically updating the GP model cannot ensure feasibility of the CBF-SOCP, which causes a diverging trajectory and constraint violations after $\approx 6s$. \revision{Using the strategy proposed in \cite{castaneda2023} with minimal admissible control amplitudes leads to Zeno behavior after $\approx 3s$, at which point the simulation is essentially brought to a halt and all probabilistic safety guarantees end. This occurrence of Zeno behavior is in stark contrast to our approach, which is designed to probabilistically ensure positive inter-event times for positive values of $\gamma$. 
As illustrated in Figure~\ref{fig:u_t_max}, this parameter establishes a trade-off between the inter-event times and maximal control amplitudes used for excitation, which enables us to tune the sampling behavior of \cref{alg:control_algorithm}.
Thereby, our approach does not only exclude Zeno behavior, but provides some design freedom to account for additional restrictions, e.g., in terms of sampling frequencies.}\looseness=-1

\section{Conclusions}\label{sec:conclusion}
We have presented a novel approach for safe \revision{event-triggered learning of GP models for CBF-based continuous-time controllers.} For achieving this, we design event triggers, which update the model using data generated online, such that they provide a sufficient excitation to efficiently reduce uncertainty. \revision{We show that finite control inputs are sufficient for maintaining feasibility of the CBF safety conditions with high probability and exclude} Zeno behavior with the proposed triggering scheme. \revision{Given a suitable GP prior and initial state, we prove that our approach probabilistically guarantees safety, which is illustrated} in numerical simulations of an adaptive cruise control system. Future work will focus on \revision{a complexity analysis and the extension of 
the proposed framework to scenarios with unknown CBFs. Moreover, it will address the problem of co-designing CBFs and the event-triggered learning scheme to ensure feasibility with given control input constraints.}\looseness=-1 
\section*{Appendix}
\label{appendix:proofs}

\subsection{Proof of \cref{lem:GPbound}}

\revision{By choosing $\tau_i = \nicefrac{\underline{\sigma}_i}{(L_i+L_{\mu_i}+L_{\sigma_i})}$ in \cite[Lemma 1]{Lederer2021HowTD} and slightly adapting this result, 
we obtain
    \begin{align*}
        \!|\mu_{i}(\bm{x})\!-\!c_i(\bm{x})| \! \leq \!\frac{\sqrt{\beta_i}}{2}\sigma_i(\bm{x}) \!+\! \frac{(L_i\!+\!L_{\mu_i} \!+\! \frac{\sqrt{\beta_i}}{2}L_{\sigma_i})\underline{\sigma}_i}{(L_i+L_{\mu_i}+L_{\sigma_i})}\!
    \end{align*}
    }%
    for all $\bm{x}\in\mathcal{X}$ with probability of at least $1-\nicefrac{\delta}{2(m+1)}$, dropping the arguments of $\beta(\cdot,\cdot)$ to simplify the presentation.  
    \revision{Note that Lipschitz constants of the mean and variance can be straightforwardly obtained due to stationarity and the sufficiently differentiable kernels $k_i(\cdot,\cdot)$ \cite{Lederer2021GaussianPb}. 
    Moreover, the sufficiently smooth kernel also ensures that sample functions $c_i(\cdot)$ of a GP admit Lipschitz constants $L_i$ defined in \cite[Theorem 3.2]{Lederer2019UniformEB} with probability $1-\frac{\delta}{2(m+1)}$ for each $i=0,\ldots,m$ individually.
    }%
    Since the right side of this inequality is linear in $\sigma_i(\cdot)$ and $\sigma_i (\boldsymbol{x}) \leq \max\{\sigma_{i}(\boldsymbol{x}),\underline{\sigma}_i\}$, we substitute $\tilde{\sigma}_i(\bm{x})=\max\{\sigma_{i}(\boldsymbol{x}),\underline{\sigma}_i\}$ to obtain 
    \begin{align*}
        |\mu_{i}(\bm{x})\!-\!c_i(\bm{x})| &\leq \frac{\sqrt{\beta_i}}{2}\tilde{\sigma}_i(\bm{x}) +\revision{ \frac{(L_i\!+\!L_{\mu_i} \!+\! \frac{\sqrt{\beta_i}}{2}L_{\sigma_i})\underline{\sigma}_i}{(L_i+L_{\mu_i}+L_{\sigma_i})}.}
    \end{align*}
    The last term in \eqref{eq:beta_i} is lower bounded by 0 as $\max_{\bm{x}\in\mathcal{X}}x_j-\min_{\bm{x}\in\mathcal{X}}x_j$ is lower bounded by $0$.
    The second term is lower bounded by $0$ since all constants are positive. As $\delta\in(0,1)$, it follows that $\frac{\beta_i}{4}\geq 2\log(\frac{m+1}{\delta})>1$, such that it holds that
    \begin{align*}
        (L_i\!+\!L_{\mu_i}\!+\!\frac{\sqrt{\beta_i}}{2}L_{\sigma_i})\underline{\sigma}_i\leq \frac{\sqrt{\beta_i}}{2}(L_i\!+\!L_{\mu_i}\!+\!L_{\sigma_i})\underline{\sigma}_i.
    \end{align*}
    Finally, noting that $\underline{\sigma}_i\leq \max\{\sigma_{i}(\boldsymbol{x}),\underline{\sigma}_i\}$, we obtain \eqref{eq:GPboundg}. The result is a consequence of the union bound \revision{over the probabilities of error bounds for all $i=0,\ldots,m$ and the probabilities of $L_i$ being Lipschitz constants.} 

\subsection{Proof of \cref{prop:safety}}

    \revision{
    Using the regression error bound in \cref{lem:GPbound}, we have
    $\hat{c}_0(\bm{x})\leq c_0(\bm{x})$
    with probability $1-\delta$. Moreover,
    \begin{align*}
        \|\bm{Q}(\bm{x})\bm{u}\| 
        &\geq \frac{\|\bm{Q}(\bm{x})\bm{u}\|_1}{\sqrt{m}} =
        \sum\limits_{i=1}^m |u_i| \sqrt{\beta_i(\delta, \underline{\sigma}_i)}\tilde{\sigma}_i(\bm{x}),
    \end{align*}
    such that the GP error bound in \cref{lem:GPbound} yields 
    $\hat{\bm{c}}^T(\bm{x})\bm{u}-\|\bm{Q}(\bm{x})\bm{u}\|\leq \bm{c}^T(\bm{x})\bm{u}$
    with probability $1-\delta$.
    Thus, satisfaction of \eqref{eq:cs_cons} implies satisfaction of \eqref{eq:cbf_const_base} with probability $1-\delta$. 
    Due to \eqref{eq:GP_cond}, there exists an $i=1,\ldots,m$ such that we can pick 
    \begin{align*}
        u_j=\begin{cases}
            0&\text{if } j\neq i\\
            -\frac{\hat{c}_0(\bm{x})}{\mu_i(\bm{x})-\mathrm{sgn}(\mu_i(\bm{x}))\sqrt{\beta_i(\delta, \underline{\sigma}_i)}\tilde{\sigma}_i(\bm{x})} &\text{if } i=j
        \end{cases}
    \end{align*}
    when $\hat{c}_0(\bm{x})<0$. If $\hat{c}_0(\bm{x})\geq 0$, the trivial choice $\bm{u}=\bm{0}$ is an admissible input in \eqref{eq:cs_cons}.
    Consequently, the optimization problem \eqref{eq:opt_QP} is feasible for all $\bm{x}\in\mathcal{C}$. Finally, a minor adaptation of \cite[Proposition 5.4]{Mestres2023FeasibilityAR} ensures that the control law is Lipschitz on $\mathcal{C}$ since $\hat{c}_0(\cdot)$, $\hat{\bm{c}}(\cdot)$ and $\bm{Q}(\cdot)$ are twice continuously differentiable due to the twice continuously differentiable kernels, $\|\bm{Q}(\bm{x})\bm{u}\|=0$ only if $\bm{u}=0$, and the constraint \eqref{eq:cs_cons} is inactive for $\hat{c}_0(\bm{x})=0$. 
    Hence, safety with probability $1-\delta$ follows from \cref{lem:cbf}.}

\subsection{Proof of \cref{prop:learn_policy}}

For $\nicefrac{-\hat{c}_0(\bm{x})}{|\xi_i(\bm{x})|}\geq \bar{u}_{\mathrm{GP}}$, \eqref{eq:cbf_const_base} is satisfied with probability $1-\delta$ following the same argumentation as in the proof of \cref{prop:safety} due to \eqref{eq:opt_trig}. \revision{If $\nicefrac{-\hat{c}_0(\bm{x})}{|\xi_i(\bm{x})|}< \bar{u}_{\mathrm{GP}}$, we have 
    \begin{align*}
        &\hat{\bm{c}}^T(\bm{x})\bar{\bm{\pi}}_i(\bm{x})-\|\bm{Q}(\bm{x})\bar{\bm{\pi}}_i(\bm{x})\| \\
        &\geq(|\mu_i(\bm{x})|-\sqrt{m\beta_i(\delta,\underline{\sigma}_i})\tilde{\sigma}_i(\bm{x}))\bar{u}_{\mathrm{GP}} \geq -\hat{c}_0(\bm{x})
    \end{align*}
    such that \eqref{eq:cbf_const_base} also holds with probability $1-\delta$, which concludes the first part of the proof.} For the second part, 
    we firstly consider the case that $\sigma_i^+(\bm{x})\leq \underline{\sigma}_i^+$, for which we have
    \begin{align}
    \frac{|\mu_i^+(\bm{x})|}{\sqrt{m\beta_i(\delta, \underline{\sigma}_i^{+})} \tilde{\sigma}_i^+(\bm{x})} 
    & \revision{\geq\frac{|\mu_i(\bm{x})|-\sqrt{\beta_i(\delta, \underline{\sigma}_i)} \tilde{\sigma}_i(\bm{x})}{\sqrt{m\beta_i(\delta, \underline{\sigma}_i^{+})} \underline{\sigma}_i^{+}}-\frac{1}{\sqrt{m}}} \notag \\
    &\revision{\geq \frac{\epsilon\sqrt{\beta_i(\delta, \underline{\sigma}_i)} \tilde{\sigma}_i(\bm{x})}{\sqrt{m\beta_i(\delta, \underline{\sigma}_i^{+})} \underline{\sigma}_i^{+}} -\frac{1}{\sqrt{m}}}\notag\\
    &\geq 1+\epsilon+\gamma\notag
    \end{align}
    with probability $1-\delta$. \revision{Here, the first line follows from applying \cref{lem:GPbound} twice, the second line is due to \eqref{eq:prior_safety_margin}, 
    and the last line is due to our choice of $\underline{\sigma}_i^{+}$ in \eqref{eq:sigma_under_cond}. 
    Observe that $\underline{\sigma}_i^{+}$ is well-defined since we can always find a sufficiently small value satisfying \eqref{eq:sigma_under_cond} due to the logarithmic dependency of $\beta_i(\cdot,\cdot)$ on $\underline{\sigma}_i^{+}$.} In the case that $\sigma_i^+(\bm{x})\geq \underline{\sigma}_i$, we can proceed analogously to obtain
    \begin{align*}
        \sigma_i^{+}(\bm{x})\leq \frac{\revision{\epsilon \sqrt{\beta_i(\delta,\underline{\sigma}_i)}\tilde{\sigma}_i(\bm{x})} }{(1+\epsilon+\gamma+\revision{\frac{1}{\sqrt{m}}})\sqrt{\revision{m}\beta_i(\delta,\underline{\sigma}_i^{+})}}
    \end{align*}
    as condition for ensuring the satisfaction of \eqref{eq:cond_event}.
    Due to \cite[Theorem 1]{Lederer2021HowTD}, this can be enforced through the requirement
    \begin{align*}
        s_i^2-\frac{s_i^4\bar{u}_{\mathrm{GP}}^2}{s_0^2+\bar{u}_{\mathrm{GP}}^2s_i^2+\sigma_{\mathrm{on}}^2}
        \leq \frac{\revision{\epsilon^2 \beta_i(\delta,\underline{\sigma}_i)\tilde{\sigma}_i^2(\bm{x})} }{(1+\epsilon+\gamma+\revision{\frac{1}{\sqrt{m}}})^2 \revision{m}\beta_i(\delta,\underline{\sigma}_i^+)}
    \end{align*}
    on the new data pair $([\bm{x}^T,\bar{\bm{\pi}}_i(\bm{x})]^T,y)$. Solving this inequality for \revision{$\bar{u}_{\mathrm{GP}}$ and ignoring negative terms under the square root} yields \eqref{eq:underline_u_GP}, such that the updated GP will satisfy \eqref{eq:cond_event}.

\subsection{Proof of \cref{th:safe_learn}}

Due to the GP model update event and the satisfaction of \eqref{eq:GP_cond} at $t=t_N$, \cref{prop:learn_policy} guarantees that \eqref{eq:GP_cond} holds for all $t\in[t_N,t_{N+1})$, where $t_{N+1}$ is the next trigger time instance defined in \eqref{eq:trigger}. Note that the switching between $\bar{\bm{\pi}}_i(\cdot)$ and $\bm{\pi}(\cdot)$ at $t=t_N$ renders the control law defined through \cref{alg:control_algorithm} discontinuous and a function of the time $t$. Nevertheless, an extended solution of \eqref{eq:si_sys} exists \cite[Chapter 2, Theorem 1.1]{Coddington1955TheoryOO} and is unique \cite[Chapter 2, Theorem 2.2]{Coddington1955TheoryOO} as 
the right-hand side of \eqref{eq:si_sys} under the control law defined by \cref{alg:control_algorithm} is a measurable function in time $t$, uniformly Lipschitz continuous in $\bm{x}$, and bounded on the compact set $\mathcal{C}$ due to the Lipschitz continuity of $f(\cdot)$, $g(\cdot)$, $\bm{\pi}(\cdot)$ and boundedness of $\bar{\bm{\pi}}_i(\cdot)$ for every $N\in\mathbb{N}$. Since \eqref{eq:cs_cons} is satisfied with probability \revision{$1-\frac{6\tilde{\delta}}{\pi^2N}$} for all $\bm{x}\in\mathcal{C}$ during the time interval $[t_N,t_{N+1})$ due to Propositions \ref{prop:safety} and \ref{prop:learn_policy}, we employ Nagumo's theorem \cite[Theorem 4.7]{Blanchini2008} to probabilistically guarantee forward invariance \revision{for every initial state $\bm{x}(t_N)\in\mathcal{C}$ of the time interval $[t_N,t_{N+1})$. Finally, we can chain these intervals together using the union bound as $\bm{x}(t_0)\in\mathcal{C}$ and \eqref{eq:prior_safety_margin} holds at $t=t_0$, such that safety is guaranteed for all $t\in[t_0,t_{\bar{N}})$ for every $\bar{N}\in\mathbb{N}$ with probability $1-\sum_{N=1}^{\bar{N}} \frac{6\tilde{\delta}}{\pi^2N^2}\geq 1-\tilde{\delta}$.}

\subsection{Proof of \cref{prop:zeno}}

The triggering function $\Gamma_N(t)$, $N\in\mathbb{N}$, is Lipschitz continuous because the standard deviation and mean functions are Lipschitz continuous, and $\tilde{\sigma}_i(x)$ is positive. Moreover, the trajectory $\bm{x}(t)$ is Lipschitz continuous with respect to time since it is the solution of a differential equation defined through bounded closed-loop dynamics, as discussed in the proof of \cref{th:safe_learn}. Let $L_{\Gamma_N}$ denote the Lipschitz constant of $\Gamma_N(\cdot)$ and let $t_N$, $t_{N+1}$ denote two consecutive triggering times. Then, we have
\begin{align*}
    \Gamma_N(t_{N+1}) & \geq \Gamma_N (t_N) - \left | \Gamma_N (t_{N+1}) - \Gamma_N (t_N)\right |\\
    &\geq 1+\epsilon + \gamma -  L_{\Gamma_N}(t_{N+1}-t_N)
\end{align*}
due to Lipschitz continuity of $\Gamma(\cdot)$ and the upper bound for $\Gamma(t_N)$ in \cref{prop:learn_policy} based on the GP model update at $t_N$. 
Because of the triggering condition \eqref{eq:cond_event}, we know that the next event occurs when $\Gamma_N(t_{N+1})=1+\epsilon$. This implies that $1+\epsilon \geq 1+\epsilon+\gamma -L_{\Gamma_N} (t_{N+1}-t_N)$
must hold. Therefore, we obtain $(t_{N+1}-t_N) \geq \Delta_N = \frac{\gamma}{L_{\Gamma_N}}$,
which concludes the proof.\looseness=-1

\section*{References}
\vspace*{-4ex}
\bibliographystyle{ieeetr}
\bibliography{main}
\end{document}